\newcommand{\A}{\mathcal{A}}
\newcommand{\B}{\mathcal{B}}
\newcommand{\F}{\mathcal{F}}
\newcommand{\T}{\mathcal{T}}
\newcommand{\nats}{\mathbb{N}}
\newcommand{\trees}[1][\Sigma]{\mathcal{T}_{#1}}
\newcommand{\dom}{\mathit{dom}}
\newcommand{\pot}{\mathit{paths}} 
\newcommand{\tfp}{\mathit{trees}} 
\newcommand{\phipath}{\varphi_\mathrm{pth}} 
\newcommand{\phitree}{\varphi_\mathrm{tr}} 
\newcommand{\phiT}{\varphi_T} 
\newcommand{\phinT}{\varphi_{\neg T}}
\begin{document}

\markboth{Thomas, L\"oding}
{On  the Boolean Closure of DTDA}

%
\catchline{}{}{}{}{}
%

\title{On  the Boolean Closure of Deterministic Top-Down Tree Automata}

\author{Christof L\"oding, Wolfgang Thomas}

\address{RWTH Aachen University, Informatik 7, Aachen, Germany\\
\email{$\{$loeding,thomas@automata.rwth-aachen.de$\}$}
}
%
%

\maketitle


\begin{abstract}
The class of Boolean combinations of tree languages recognized by deterministic 
top-down tree automata (also known as deterministic root-to-frontier automata) 
is studied. The problem of determining for a given regular tree language whether 
it belongs to this class is open. We provide some progress by two results: 
First, a characterization of this class by a natural extension of 
deterministic top-down tree automata is presented, and as an application we 
obtain a convenient method to show that certain regular tree languages 
are outside this class. In the second result, it is shown that, for fixed $k$,  it is decidable 
whether a regular tree language is a Boolean combination of $k$ tree languages 
recognized by deterministic top-down tree automata.
\end{abstract}

\keywords{regular tree languages; top-down tree automata; MSO logic.}

\section{Introduction}	

In the theory of finite automata over finite trees, the principal model is the ``bottom-up"
(or ``frontier-to-root'') tree automaton; in either its deterministic 
or nondeterministic version it serves to characterize the ``regular tree languages''. 
It is well-known that the nondeterministic ``top-down" (or``root-to-frontier") tree automata are 
of the same expressive power whereas the deterministic ones are strictly 
weaker. In the sequel we shall call the latter deterministic top-down tree automata
(DTDA) and denote the class of tree languages recognized by them ${\cal T}(\rm{DTDA})$.

The expressive power of DTDA is  well understood. For this purpose one extracts 
from a tree $t$ the set $\pot(t)$ of finite paths in $t$ from root to frontier (in a natural way 
described in detail below) and thus gets, for a tree language $T$, the set 
$\pot(T) = \bigcup \{\pot(t) \mid t \in T\}$. Then a regular tree language $T$ is recognizable 
by a DTDA iff $T$ is fixed by $\pot(T)$ in the sense that
\begin{equation*}
  t \in T \ \ \mbox{iff}  \ \ \pot(t) \subseteq \pot(T).
  \tag{$*$}
\end{equation*}
A closer analysis of $(*)$ shows that recognizability of a regular tree language by a DTDA is 
decidable (cf. Vir\`{a}gh \cite{Viragh80}).  The proof of this result also appears in the first 
(and up to this time) only monograph on tree automata \cite{GecsegS84} by 
Gecseg and Steinby. As one of the founders of the theory of tree automata, Magnus Steinby has contributed numerous substantial results to this subject, among them several results on deterministic top-down tree automata (see, e.g. \cite{GecsegS78} and \cite{JurvanenS2019}). It is our pleasure and honor to devote this paper to the memory of Magnus, a great colleague and friend.

DTDA are very limited in expressive power. The class $\cal{T}({\rm DTDA})$ 
is neither closed under union nor under complement. This is apparent 
already from a very simple example  tree language that is not DTDA-recognizable, 
namely the  two element set $T_0 = \{f(a,b), f(b,a)\}$ (using the term notation for trees). 
If $T_0$ was DTDA-recognizable, then invoking $(*)$ also the trees $f(a,a)$ and $f(b,b)$ would belong
to $T_0$, a contradiction. 

In order to overcome this weakness of DTDA, several extensions of the class of 
DTDA-recognizable tree languages (within the class of regular tree languages) 
have been considered, among them the closure of 
${\cal T}({\rm DTDA})$ under Boolean operations \cite{Jurvanen92} and the class of tree languages 
recognized by DTDA with ``regular frontier-check'' \cite{JurvanenPT93}. In the present paper we study 
the Boolean closure of ${\cal T}({\rm DTDA})$, denoted ${\rm Bool}({\cal T}({\rm DTDA}))$, 
extending the work of Jurvanen \cite{Jurvanen92}. A major problem, open now for several decades,
 is to show the analogue of Vir\'{a}gh's result mentioned above, i.e., to show that membership 
 of a regular tree language in ${\rm Bool}({\cal T}({\rm DTDA}))$ is decidable (or to show 
 the unlikely opposite -- that it is undecidable). 
 
 While leaving open this question, the present paper offers two results that provide some progress. 
 First we present a natural model of automaton (``DTDA with set acceptance") 
 that characterizes ${\rm Bool}({\cal T}({\rm DTDA}))$
 and permits simple proofs that certain regular tree languages do not belong to this class. 
 Second, we provide a decision procedure for subclasses of ${\rm Bool}({\cal T}({\rm DTDA}))$, 
 where only a bounded number of Boolean operations is admitted on top of  ${\cal T}({\rm DTDA})$. 
 More precisely, we consider, for $k \geq 1$, the class $k$-${\rm Bool}({\cal T}({\rm DTDA}))$, 
 containing the tree languages that are Boolean combinations of $k$ tree languages from 
 ${\cal T}({\rm DTDA})$. We show that for each $k \geq 1$, membership of a regular tree 
 language in the class $k$-${\rm Bool}({\cal T}({\rm DTDA}))$ is decidable.

 Let us mention three papers on related work.

In \cite{Jurvanen92}, Eija Jurvanen provided the first study of the class
${\rm Bool}({\cal T}({\rm DTDA}))$, showing algebraic properties and
giving first examples of regular tree languages outside this class.
Using the DTDA with set acceptance of the present paper we get such
examples without entering complex combinatorial calculations as they
appear in \cite{Jurvanen92}.

In his PhD thesis \cite{BojPhD} of Mikolaj Boja{\`n}zcyk obtained
a characterization of ${\rm Bool}({\cal T}({\rm DTDA}))$ in terms of
{\em word sum automata}, which are deterministic bottom-up tree automata
derived from word automata for paths. Our model of DTDA with
set acceptance has the advantage of being a direct generalization
of DTDA in which proofs of non-membership in ${\rm Bool}({\cal T}({\rm DTDA}))$
are straightforward.

Finally, we mention the recent paper \cite{LeupoldM21} of P.~Leupold and S.~Maneth
in which -- among other results -- decidability of a subclass of
${\rm Bool}({\cal T}({\rm DTDA}))$ is claimed, namely of the class of finite
unions of tree languages in ${\cal T}({\rm DTDA})$. Since Lemma~5 of
that paper is not valid (as a counter-example shows), this decidability
claim seems open to us (parts of the results from \cite{LeupoldM21} have been fixed in \cite{ManethS24}, but this does not include the decidability of the class of finite unions of tree languages in ${\cal T}({\rm DTDA})$.)

\section{Preliminaries}

Ranked trees are terms composed from symbols of a ranked alphabet, 
which is a finite set of symbols where each symbol has a fixed rank, 
a natural number that indicates the arity of this symbol. Thus 
symbols of rank $k > 0$ can be considered as function symbols of 
arity $k$ and symbols of rank 0 as constants. We denote a ranked 
alphabet by the letter $\Sigma$, the rank of symbol $a$ as $|a|$, and 
the set of symbols of $\Sigma$ of rank $i$ by $\Sigma_i$.  
Ranked trees over $\Sigma$ are also called $\Sigma$-trees. The set 
$\trees$ of $\Sigma$-trees is defined inductively by the two 
clauses that each symbol $c \in \Sigma_0$ belongs to $\trees$ and 
that with $t_1, \ldots, t_i \in \trees$ and $f \in \Sigma_i$ also
$f(t_1, \ldots, t_i)$ belongs to $\trees$.  

It is convenient to use also a graphical representation (as the 
word ``tree'', rather than ``term'', suggests). Here a tree consists 
of nodes that are labelled with letters from 
$\Sigma$. Formally, for a tree $t$ we introduce its domain 
$\dom(t) \subseteq \nats^*$\index{$d$@$\dom(t)$}. If the 
tree is a constant then this domain just contains 
$\varepsilon$; if $t = a(t_1, \ldots, t_k)$ with $k > 0$ then 
$\dom(t) = \{\varepsilon\} \cup \bigcup_{i=1}^{k} i \cdot \dom(t_i)$, 
where $\cdot$ is the concatenation of words. We say that $\varepsilon$ 
is the root, that a node $u$ with a symbol of rank $k > 0$ has $u1, \ldots, uk$ 
as its children, that nodes without children are leaves, and that 
non-leaf-nodes are inner nodes. A tree $t $ over $\Sigma$ can then be 
presented as a map $t : \dom(t) \rightarrow \Sigma$. The nodes 
of $\dom(t)$ without successors in $\dom(t)$ form the ``frontier'' of 
$t$; this set is denoted ${\rm fr}(f)$. Sometimes we also use successors 
 of frontier nodes and call them nodes of the ``outer frontier''. We set 
${\rm fr}^+(t) := \{u 1 \mid  u \in {\rm fr}(t)\}$ and 
 $\dom^+(t) := \dom(t) \cup {\rm fr}^+(t)$.

Now we recall the standard concepts of deterministic tree automata in 
the two versions ``bottom-up" and ``top-down".

A deterministic {\em bottom-up 
tree automaton} over $\Sigma = \bigcup_{i = 0}^{r} \Sigma_i $
is of the form $\A = (Q,  \delta, F)$, where $Q$ is the finite set of states, 
$F \subseteq Q$ is the set of ``final'' (or ``accepting'') states, and $\delta$ the transition 
function, given as the union of functions $\delta_i : Q^i \times \Sigma_i \rightarrow Q$ 
for $i = 0, \ldots , r$. Here $\delta_0$ is considered as a function from $\Sigma_0$ to $Q$. 
We define, for any tree $t$ over $\Sigma$, the state  $\delta_\A(t)$ 
inductively, by setting, for $a \in \Sigma_0$, $\delta_\A(a) := \delta_0(a)$, and for 
$a \in \Sigma_i$ where $i \in \{1, \ldots, r\}$, $\delta_\A(a(t_1, \ldots, t_i)) := 
\delta_i(\delta_\A(t_1) , \ldots, \delta_\A(t_i), a)$. The tree $t$ is accepted by $\A$ if 
$\delta_\A(t) \in F$, and the tree language recognized by $\A$ is 
$T(\A) = \{t \in T_\Sigma \mid t \mbox{ is accepted by } \A\}$.  
The tree languages recognized by bottom-up tree automata are the ``regular'' ones. 

A deterministic {\em top-down  tree automaton} (DTDA) over the ranked alphabet 
$\Sigma = \bigcup_{i = 0}^{r} \Sigma_i $ is of the form 
$\A = (Q, q_0,  \delta, F)$ where $Q$ is the finite set of states, 
$q_0 \in Q$, $F \subseteq Q$, and $\delta$ again 
given as  the union of functions $\delta_i$ ($i \in \{0,\ldots, r\}$), now with 
$\delta_i: Q \times \Sigma_i \rightarrow Q^i$ for $i \in \{1, \ldots, r\} $ and 
$\delta_0 : Q \times \Sigma_0 \rightarrow Q$. We define the function 
$\delta_\A: Q \times T_\Sigma \rightarrow 2^Q$, 
associating with each tree $t$ the ``set of states reached by $\A$ from $q$ via $t$ 
at the outer frontier'', as 
follows: If $t = a \in \Sigma_0$ then $\delta_\A(q, t) = \{\delta_0(q, a)\}$. If $t = a(t_1, \ldots , t_i)$, 
say with $\delta_i(q, a) = (q_1, \ldots, q_i)$, then 
$\delta_\A(q, t) = \delta_\A(q_1, t_1) \cup \ldots \cup \delta_\A(q_i, t_i)$. The automaton $\A$ 
accepts $t$ if $\delta_\A(q_0, t) \subseteq F$. As already said  in the introduction, we denote 
by ${\cal T}({\rm DTDA})$ the class of tree languages recognized by DTDA. 

The behaviour of DTDA can best be understood in terms of the runs over the given 
input trees. Here we refer to the ``graphical'' presentation of trees and 
associate with any node $u$ of $\dom^+(t)$ the state reached at $u$ by the automaton under 
consideration. Contrary to the case of bottom-up automata, the run of a DTDA can be 
built up ``path-wise'': To determine the state at node $u$ it suffices to follow the path 
from the root to $u$ and apply the transition function of the automaton restricted to 
this path. Let us describe this well-known observation in precise terms.

A (labeled) path over a ranked alphabet $\Sigma$ is a finite word of the form $\pi = a_1d_1a_2d_2 \cdots a_n d_n a_{n+1}$ where
\begin{itemize}
\item $a_i \in \Sigma$ for all $i \in \{1, \ldots, n+1\}$, 
\item $d_i \in \{1, \ldots, |a_i|\}$ for all $i \in \{1, \ldots, n\}$, and
\item $a_{n+1} \in \Sigma_0$.
\end{itemize}
So $\pi$ alternates between labels and possible directions in the tree, ending with a leaf label. We denote the alphabet for paths by $\Gamma$, so
\[
\Gamma := \Sigma \cup D \text{ with $D := \{1, \ldots,r\}$}
\]
where $r$ is the maximal rank of a symbol in $\Sigma$. 

For a tree $t$ over $\Sigma$ we denote by $\pot(t) \subseteq \Gamma^*$ the set of all labeled paths in $t$. 
For example, let $\Sigma_2 = \{a,b\}$, $\Sigma_0 = \{c,d\}$, and $t$ be the left-most tree in Figure~\ref{fig:paths}, that is, $t = a(b(c,d),c)$. Then $\pot(t) = \{a1b1c, a1b2d, a2c\}$. We lift this to sets of trees by defining $\pot(T) := \bigcup_{t \in T} \pot(t)$. 

Vice versa, if we are given a set $P$ of labeled paths, then we define
\[
\tfp(P) := \{t \in \trees \mid \pot(t) \subseteq P\}
\]
to be the set of all trees that can be built from paths in $P$. An example is shown in Figure~\ref{fig:paths}.

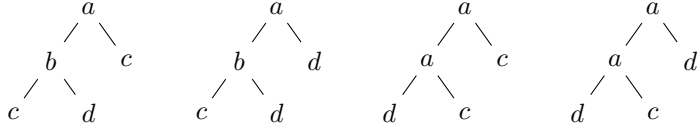
\begin{figure}
\begin{center}
\begin{tikzpicture}[grow via three points={%
one child at (0,-.7) and two children at (-.5,-.7) and (.5,-.7)}]
\node at (0,0) {$a$}
         child {node {$b$}
           child {node {$c$}} 
           child {node {$d$}}
         }
         child {node {$c$}}
         ;
\node at (2.5,0) {$a$}
         child {node {$b$}
           child {node {$c$}} 
           child {node {$d$}}
         }
         child {node {$d$}}
         ;
\node at (5,0) {$a$}
         child {node {$a$}
           child {node {$d$}} 
           child {node {$c$}}
         }
         child {node {$c$}}
         ;
\node at (7.5,0) {$a$}
         child {node {$a$}
           child {node {$d$}} 
           child {node {$c$}}
         }
         child {node {$d$}}
         ;
        
\end{tikzpicture}
\end{center}
\caption{The set $\tfp(P)$ for $P = \{a1b1c, a1b2d, a1a1d, a1a2c, a2c, a2d, b2c\}$. Note that the path $b2c$ is in $P$ but does not occur in any of the trees because no tree with $b$ at the root can be built from  paths in $P$.} \label{fig:paths}
\end{figure}


The following theorem characterizes DTDA-recognizable languages in terms of paths.
\begin{theorem}[\cite{Viragh80}]\label{thm:DTDApaths}
A tree language $T$ over $\Sigma$ is DTDA-recognizable iff \/ $T=\tfp(P)$ for a regular set $P \subseteq \Gamma^*$ of paths. 
\end{theorem}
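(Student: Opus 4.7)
The plan is to prove the two directions by exploiting the observation already made in the paper: the run of a DTDA can be built path-wise, so a DTDA is essentially a collection of word-automaton computations, one for each root-to-leaf path, glued together by the tree structure.

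For the forward direction, given a DTDA $\A = (Q,q_0,\delta,F)$ I will define a DFA $\B$ over $\Gamma = \Sigma \cup D$ whose state set is $Q$ and whose transitions simulate $\A$ along a single path: reading a symbol $a \in \Sigma_i$ with $i \geq 1$ from state $q$ leads to an intermediate state encoding the tuple $\delta_i(q,a) \in Q^i$; reading the direction $d \in \{1,\dots,i\}$ then selects the $d$-th component. For a leaf $c \in \Sigma_0$ the machine moves to $\delta_0(q,c)$, and the set of accepting states is $F$. Let $P := L(\B)$. It is immediate from the inductive definition of $\delta_\A(q_0,t)$ that the states appearing at frontier nodes of the run of $\A$ on $t$ are exactly the states reached by $\B$ on the words in $\pot(t)$ starting from $q_0$; hence $\delta_\A(q_0,t) \subseteq F$ iff $\pot(t) \subseteq P$, giving $T(\A) = \tfp(P)$.

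For the converse, given a regular path language $P \subseteq \Gamma^*$, I fix a DFA $\B = (Q',q'_0,\delta',F')$ for $P$ and construct a DTDA $\A$ with state set $Q'$, initial state $q'_0$, final states $F'$, and transitions defined by
\[
  \delta_i(q,a) = \bigl(\delta'(\delta'(q,a),1),\,\delta'(\delta'(q,a),2),\,\ldots,\,\delta'(\delta'(q,a),i)\bigr)
\]
for $a \in \Sigma_i$, $i \geq 1$, and $\delta_0(q,a) = \delta'(q,a)$ for $a \in \Sigma_0$. Thus whenever $\A$ descends from a node labelled $a$ into the $d$-th subtree, the accompanying path-state of $\B$ has read the two letters $a, d$. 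A routine induction on $t$ shows that for each $q \in Q'$, the state that $\A$ reaches at a frontier node $u$ starting from $q$ is exactly the state that $\B$ reaches when reading the labelled path from the root of $t$ to $u$ starting from $q$; hence $\delta_\A(q'_0,t) \subseteq F'$ iff every word in $\pot(t)$ belongs to $P$, i.e., iff $t \in \tfp(P)$.

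Since both constructions are straightforward once one formalises the ``path-wise execution'' remark, there is no serious obstacle; the only mildly delicate point is bookkeeping the intermediate state after reading an inner label in the DFA $\B$ of the forward direction (which in fact stores a full tuple from $Q^i$), but this is handled with minor care by adding a finite number of auxiliary states, and both inductions on the structure of $t$ go through without difficulty.
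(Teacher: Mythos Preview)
The paper does not actually supply its own proof of this theorem: it is stated with a citation to Vir\'agh~\cite{Viragh80} and then used as a black box. So there is nothing to compare your argument against inside the paper itself.

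That said, your proof is correct and is essentially the standard one. The forward direction builds a DFA over $\Gamma$ whose runs simulate the DTDA along a single root-to-leaf path, so that $\pot(t)\subseteq L(\B)$ exactly when $\delta_\A(q_0,t)\subseteq F$; the only cosmetic slip is that you first say the DFA has state set $Q$ and then (correctly) note that auxiliary states storing the tuple $\delta_i(q,a)$ are needed between reading $a$ and the following direction letter---just take the state set to be $Q\cup\bigcup_i Q^i$ (or, more parsimoniously, $Q\cup(Q\times\Sigma)$). The converse direction is exactly right: the DTDA transition at an $a$-labelled node simulates the DFA reading $a$ followed by the appropriate direction, and the induction showing that the outer-frontier states of $\A$ coincide with the DFA states on the corresponding words in $\pot(t)$ is routine. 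Both directions together yield $T(\A)=\tfp(P)$ with $P$ regular, as required.
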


%
%
%
%
%
%
%
%

\section{Automata with Set Acceptance}

We recall that a DTDA $\A = (Q, q_0, \delta, F)$ accepts a tree $t$ if $\delta_\A(t) \subseteq F$, i.e.,  {\em all}  states reached 
in the run of $\A$ over $t$ at the outer frontier 
 belong  to $F$. A DTDA with set acceptance will make 
more detailed use of the set $R$ of states reached at the outer frontier of the tree under consideration. 
In a standard DTDA this set should satisfy $R \subseteq F$. We shall allow also a dual condition, 
requiring that {\em some} state of $F$ is reached, which can be phrased as $R \cap F \not= \emptyset$ 
with $R$ specified as above. More generally, we shall allow an acceptance requirement for any combination 
on the existence or non-existence of states of $Q$. In other words, we allow for any subset $F$ of $Q$ 
the requirement that $\delta_\A(t) = F$, i.e., that precisely the states of $F$ are reached at the outer frontier. Collecting those $F$ which 
are admitted in this sense for acceptance of a tree, we obtain a family $\F$ of state sets as a new kind of acceptance 
component of DTDA.  

In the remainder of this section we only consider binary alphabets of the form $\Sigma = \Sigma_0 \cup \Sigma_2$. We make this assumption for notational simplicity, the results hold for arbitrary ranked alphabets.

A {\em DTDA with set acceptance} over the ranked alphabet $\Sigma$ 
is of the form 
${\cal A} = (Q, q_0, \delta, {\cal F})$ 
where $q_0$ and $\delta$ are as for DTDA and where ${\cal F} \subseteq 2^Q$ is a family 
of state sets over $Q$. The automaton ${\cal A}$ accepts the tree $t$ if 
$\delta_\A(q_0, t) \in {\cal F}$. 

This type of acceptance condition is familiar from the theory of automata over infinite words
and called there Muller condition (introduced in \cite{Muller63}); an infinite 
run is declared accepting if the states occurring infinitely often in it form a set that belongs to ${\cal F}$. 

Let us mention some immediate properties and examples of DTDA with set acceptance. 
First, as expected, the standard DTDA can be considered as a special case:  
\begin{remark}\label{extension}
Each DTDA can be presented as a DTDA with set acceptance.
\end{remark}

Consider a DTDA ${\cal A} = (Q, q_0, \delta, F)$ and observe  
that the tree automaton ${\cal A}' = (Q, q_0, \delta, \{R \in 2^Q \mid  R \subseteq F\} )$ with set acceptance is 
equivalent to ${\cal A}$. 

We recall that a singleton tree language $\{t\}$ can be recognized by a DTDA with a singleton 
set $F = \{q_+\}$ as acceptance component: A corresponding DTDA has, besides 
the accepting state $q_+$ and the rejecting state $q_-$,  
one state for each symbol occurrence in $t$. We denote the different occurrences of symbol $a$ by 
$a^i$ for different numbers $i$ and denote by $q_{a^i}$ the corresponding states. 
If the top symbol of $t$ is the symbol occurrence $a^i$ we take $q_{a^i}$ as initial state $q_0$. 
If in $t$ the triple $a^i (b^j, c^k)$ of symbol occurrences appears (where $a \in \Sigma_2$),  
we set $\delta_2(q_{a^i}, a) = (q_{b^j}, q_{c^k})$, and for a symbol occurrence $a^i$ with $a \in \Sigma_0$ 
we set $\delta_0(q_{a^i}, a) = q_+$; in all other cases we take $(q_-, q_-)$, respectively $q_-$, as 
value. It is clear that this DTDA recognizes the set $\{t\}$.
Proceeding to DTDA with set acceptance, we use the same automaton, but taking as acceptance 
component the singleton family  $\{ \{q_+\} \}$. So we have the following remark:

~
\begin{remark}\label{sing}
A singleton tree language is recognized by 
\begin{itemize}
\item a DTDA with a singleton as acceptance component,  
\item a DTDA with set acceptance whose acceptance component is a singleton consisting of a singleton.
\end{itemize}
\end{remark}

Next we verify that DTDA with set acceptance allow for closure under Boolean operations: 

\begin{proposition}\label{bool}
The class of tree languages recognized by DTDA with set acceptance (over a fixed ranked alphabet)  is 
closed under Boolean operations. 
\end{proposition}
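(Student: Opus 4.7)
The plan is to verify the three Boolean operations, with complement being essentially free and intersection following from a standard product construction; union then follows by De Morgan.

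First I would handle complement. Given a DTDA with set acceptance $\A = (Q, q_0, \delta, \F)$ recognizing a language $T$, consider the automaton $\A' = (Q, q_0, \delta, 2^Q \setminus \F)$. Since acceptance depends only on whether $\delta_\A(q_0, t) \in \F$, the automaton $\A'$ accepts exactly the trees rejected by $\A$. Note that this step genuinely uses the set acceptance condition: the straightforward complement would not work for ordinary DTDA, which is precisely why ${\cal T}({\rm DTDA})$ fails to be Boolean closed.

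For intersection, I would build a product automaton. Given $\A_j = (Q_j, q_{0,j}, \delta_j, \F_j)$ for $j = 1, 2$, take $\A = (Q_1 \times Q_2, (q_{0,1}, q_{0,2}), \delta, \F)$, where $\delta$ runs $\A_1$ and $\A_2$ synchronously: for $a \in \Sigma_2$ with $\delta_{j,2}(p_j, a) = (p_j^1, p_j^2)$, set $\delta_2((p_1, p_2), a) = ((p_1^1, p_2^1), (p_1^2, p_2^2))$, and analogously at leaves. A straightforward induction on the structure of $t$ shows that $\delta_\A((q_{0,1}, q_{0,2}), t)$ is precisely the set of pairs $(\delta_{\A_1}(q_{0,1}, t)|_u, \delta_{\A_2}(q_{0,2}, t)|_u)$ obtained synchronously at each outer frontier node $u$; in particular its projections onto $Q_1$ and $Q_2$ coincide with $\delta_{\A_1}(q_{0,1}, t)$ and $\delta_{\A_2}(q_{0,2}, t)$, respectively. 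Hence defining
\[
\F := \{ R \subseteq Q_1 \times Q_2 \mid \pi_1(R) \in \F_1 \text{ and } \pi_2(R) \in \F_2 \}
\]
yields an automaton recognizing $T(\A_1) \cap T(\A_2)$. Union follows by applying De Morgan, or directly by replacing ``and'' with ``or'' in the definition of $\F$.

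The only point requiring care is the product step: one must notice that, because each DTDA's outer-frontier state at a given node $u$ depends only on the path from the root to $u$, the synchronous product faithfully records both components and that the desired projection identity holds. No step here seems to present a real obstacle; the main content is the conceptual observation that the flexibility of the family $\F \subseteq 2^Q$ absorbs Boolean combinations naturally, whereas the single-set acceptance of ordinary DTDA does not.
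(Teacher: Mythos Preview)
Your proposal is correct and follows essentially the same approach as the paper: complement via $2^Q \setminus \F$, and a synchronous product automaton whose acceptance family is defined via projections. The only cosmetic difference is that the paper constructs union directly (your ``or'' variant) rather than going through intersection and De~Morgan.
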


\begin{proof}
It suffices to consider complementation and union. The proof is straightforward; we 
only give the definition of the required automata.

Given the tree automaton 
${\cal A} = (Q, q_0, \delta, {\cal F})$ with set acceptance, we may take ${\cal A}' = (Q, q_0, \delta, 2^Q \setminus {\cal F})$
as the complement automaton of ${\cal A}$. Given two DTDA ${\cal A}_1 = (Q_1, q^1_{0}, \delta^1, {\cal F}^1)$ 
and ${\cal A}_2 = (Q_2, q^2_{0}, \delta^2, {\cal F}^2)$ with set acceptance (and where $\delta^i = \delta^i_0 \cup \delta^i_2$), 
the following 
DTDA with set acceptance recognizes $T({\cal A}_1) \cup T({\cal A}_2)$: 
$$ {\cal A}' = (Q_1 \times Q_2, (q^1_0, q^2_0), \delta',  {\cal F}')$$
where $\delta' = \delta'_0 \cup \delta'_2$ with the definitions 
$\delta'_0((p,q), a)  = (\delta^1_0 (p,a), \delta^2_0(q,a))$  and,
using the projections $\pi_1, \pi_2$ with $\pi_1(p,q) = p$ 
and $\pi_2(p,q) = q$,  
$$
\delta'_2((p,q), a) = ((\pi_1(\delta^1_2 (p,a)), \pi_1(\delta^2_2 (q,a))), (\pi_2(\delta^1_2 (p,a)), \pi_2(\delta^2_2 (q,a))))
$$
Finally, we define ${\cal F}'$ by specifying that 
 $R = \{(p_1, q_1), \ldots, (p_k, q_k)\}$ belongs to ${\cal F}'$ iff 
$\{p_1, \ldots, p_k\} \in {\cal F}^1$ or $\{q_1, \ldots, q_k\} \in {\cal F}^2$. 
\end{proof}

As a consequence of Remark \ref{sing} and Proposition~\ref{bool}, we see that each finite tree language $\{t_1, \ldots, t_n\}$ 
(and hence also each co-finite tree language) is 
recognizable by a DTDA with set acceptance. As mentioned above, standard DTDA fail to recognize 
the two-element tree language $\{f(a,b), f(b,a)\}$; so we have shown that the expressive power 
of DTDA   with set acceptance properly exceeds that of DTDA. 

Next we verify that DTDA with set acceptance characterize the 
Boolean closure of ${\cal T}({\rm DTDA})$. Intuitively speaking, we see that adding Boolean 
logic on the occurrence of states to the acceptance condition of DTDA (i.e.,  ``inside'' DTDA) 
amounts to the same as adding Boolean logic to the tree languages recognized by DTDA (i.e., ``outside'' DTDA). 

\begin{proposition}\label{charact}
A tree language $T$ is recognized by a DTDA with set acceptance iff T belongs to ${\rm Bool}({\cal T}({\rm DTDA}))$.
\end{proposition}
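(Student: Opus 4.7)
The plan is to prove the two directions separately, with the forward direction being a direct application of the preceding results and the backward direction requiring a decomposition of the set-acceptance family into per-state conditions.

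For the direction ${\rm Bool}({\cal T}({\rm DTDA})) \subseteq$ languages recognized by DTDA with set acceptance, I would simply combine Remark~\ref{extension} and Proposition~\ref{bool}: every standard DTDA can be viewed as a DTDA with set acceptance, and the latter class is closed under Boolean operations. So any Boolean combination of DTDA-recognizable languages is recognized by a DTDA with set acceptance.

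For the converse, fix a DTDA with set acceptance $\A = (Q, q_0, \delta, \F)$. The key observation is that acceptance is determined entirely by the set $\delta_\A(q_0,t)$ reached at the outer frontier, so
\[
T(\A) \;=\; \bigcup_{F \in \F} \Bigl( \bigcap_{q \in F} L_q \;\cap\; \bigcap_{q \in Q \setminus F} \overline{L_q} \Bigr),
\]
where $L_q := \{t \in \trees \mid q \in \delta_\A(q_0,t)\}$. Since $\F$ and $Q$ are finite, this is a Boolean combination of the languages $L_q$, so it suffices to show that each $\overline{L_q}$ belongs to ${\cal T}({\rm DTDA})$. But $t \in \overline{L_q}$ means that $q$ does \emph{not} appear anywhere at the outer frontier of the run, which is precisely the universal condition $\delta_\A(q_0, t) \subseteq Q \setminus \{q\}$. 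Hence $\overline{L_q}$ is recognized by the standard DTDA $(Q, q_0, \delta, Q \setminus \{q\})$, and consequently $L_q \in {\rm Bool}({\cal T}({\rm DTDA}))$. Combining over all $F \in \F$ gives $T(\A) \in {\rm Bool}({\cal T}({\rm DTDA}))$.

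There is no real obstacle here; the only conceptually nontrivial point is realizing that while $L_q$ itself is an existential (``there is an outer-frontier node in state $q$'') condition and therefore generally not DTDA-recognizable, its complement is a universal condition that a standard DTDA can enforce directly by choosing the right accepting state set. The rest is a finite Boolean bookkeeping over $\F$.
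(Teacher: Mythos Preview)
Your proof is correct and follows essentially the same approach as the paper. The paper first reduces to singleton families $\{F\}$ and then writes the acceptance condition as the standard DTDA with accepting set $F$ intersected with the languages $T_q$ (trees where $q$ is reached) for each $q \in F$; your single Boolean formula over the $L_q$'s is exactly this decomposition written in one step, and both arguments hinge on the identical key observation that $\overline{L_q}$ is recognized by the standard DTDA $(Q, q_0, \delta, Q \setminus \{q\})$.
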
 

\begin{proof}
The direction from right to left is clear from Remark \ref{extension} and Proposition~\ref{bool}. 

For the 
direction from left to right, consider a tree automaton ${\cal A} = (Q, q_0, \delta, {\cal F})$ with set acceptance. 
Let ${\cal F} = \{F_1, \ldots, F_k\}$. Then the tree language recognized by ${\cal A}$ is the union of 
the tree languages recognized by the DTDA ${\cal A}_i = (Q, q_0, \delta, \{ F_i \})$. So it suffices to show 
that a DTDA of the simple form ${\cal B} = (Q, q_0, \delta, \{F\})$ with set acceptance recognizes a tree language in 
${\rm Bool}({\cal T}({\rm DTDA}))$. Now ${\cal B}$ accepts a tree $t$ if 
\begin{itemize}
\item
each state reached via some path in the unique run of ${\cal B}$ 
on $t$ belongs to $F$, and that conversely 
\item each $F$-state is reached via some path in the unique run of ${\cal B}$ on $t$. 
\end{itemize}
So $T({\cal B})$ is the intersection of the tree language recognized by the DTDA $(Q, q_0, \delta, F)$ and, 
writing $F = \{q_1, \ldots, q_r\}$, of the intersection of the tree languages $T_{q_i}$ where 
$T_{q_i}$ contains the trees on which the given DTDA reaches 
$q_i$ somewhere at the outer frontier. 
Now $T_{q_i}$ belongs to ${\rm Bool}({\cal T}({\rm DTDA}))$ 
since the complement of $T_{q_i}$ is recognized by the standard DTDA 
$(Q, q_0, \delta, Q \setminus \{q_i\})$. 
\end{proof}
%

As an application of Proposition \ref{charact}, we present examples of regular tree languages that 
are outside ${\rm Bool}({\cal T}({\rm DTDA}))$.

DTDA with set acceptance can check trees for the existence or the non-existence of 
paths (with regular properties) but this does not cover conditions on the {\em number} of occurrences 
of paths or on the (left-to-right) {\em order} of different paths. As we shall show, these properties of 
(or relations between) paths take us outside ${\rm Bool}({\cal T}({\rm DTDA}))$. Regarding the aspect
of ordering, we say that symbol $a$ occurs left to symbol $b$ in a tree if $a$ and $b$ 
are the labels of two nodes $u$ and  $v$ 
such that for the last common ancestor $w$ of them $u$ belongs to the left subtree of $w$ 
and $v$ to the right subtree of the node $w$. 

\begin{proposition}\label{twoexamples}
Let  $\Sigma = \Sigma_2 \cup \Sigma_0$ with $\Sigma_2 = \{a\}$ and $\Sigma_0 = \{c, d, e\}$ and 
\begin{itemize}
\item 
$T_1 = \{ t \in \T_\Sigma \mid  d \mbox{ occurs only once in  } t\}$,  
\item
$T_2 = \{ t \in \T_\Sigma \mid  d \mbox{ occurs left of } \ e \mbox{ in  } t\}$. 
\end{itemize}
Neither $T_1$ nor $T_2$ belong to ${\rm Bool}({\cal T}({\rm DTDA}))$.
\end{proposition}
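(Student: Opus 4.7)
The plan is to apply Proposition~\ref{charact} and argue that neither $T_1$ nor $T_2$ is recognized by any DTDA with set acceptance. The key observation to exploit is that the state assumed at an outer-frontier position $u1$ (for $u$ a leaf of $t$) depends only on the labeled path from the root to $u$. Hence, writing $q_u$ for the ``arrival state'' of $\mathcal{A}$ just before reading $t(u)$, one has
\[
  \delta_\mathcal{A}(q_0, t) \;=\; \{\delta_0(q_u, t(u)) : u \in \mathrm{fr}(t)\}.
\]
In particular, if two leaves $u,v$ have the same arrival state $q_u = q_v$ and carry the same label, they contribute the same element to this set. Both proofs proceed by contradiction: assuming some DTDA with set acceptance $\mathcal{A} = (Q, q_0, \delta, \mathcal{F})$ with $n = |Q|$ recognizes the language, I will exhibit two trees with different membership status but identical outer-frontier state sets.

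For the skeleton I take the complete binary tree $T_h$ over $\Sigma$ with $a$ at all internal nodes and $c$ at all $2^h$ leaves, and consider the arrival states $q_\ell$ ($\ell = 1, \dots, 2^h$) induced by the top-down run of $\mathcal{A}$. A pigeonhole argument on the leaves will provide the repeated arrival states I need. Concretely, for $T_1$ I take $h$ with $2^h > 2n$, so that some state $q$ is the arrival state of three distinct leaves $i, j, k$, while for $T_2$ it suffices to take $2^h > n$ and obtain two leaves $i<j$ (in left-to-right order) with the same arrival state $q$.

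For $T_1$, let $t$ be $T_h$ with leaf $i$ relabeled $d$ and $t'$ the same tree but with leaves $i,j$ relabeled $d$; then $t \in T_1$, $t' \notin T_1$. Comparing the state sets,
\[
  \delta_\mathcal{A}(q_0, t) = \{\delta_0(q_\ell,c) : \ell \ne i\} \cup \{\delta_0(q_i,d)\},
\]
\[
  \delta_\mathcal{A}(q_0, t') = \{\delta_0(q_\ell,c) : \ell \ne i,j\} \cup \{\delta_0(q_i,d),\delta_0(q_j,d)\},
\]
and using $q_i = q_j = q_k = q$, the extra $d$-contribution $\delta_0(q_j,d) = \delta_0(q_i,d)$ is absorbed, while the ``lost'' $c$-contribution $\delta_0(q_j,c) = \delta_0(q,c)$ is still supplied by $k$ (which keeps label $c$ in both trees); hence the two state sets coincide, contradicting that $\mathcal{A}$ distinguishes $t$ from $t'$. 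For $T_2$, let $t$ be $T_h$ with leaf $i$ relabeled $d$ and leaf $j$ relabeled $e$, and $t'$ with these two leaves swapped; by the left-to-right order of leaves, $t \in T_2$ and $t' \notin T_2$. Since $q_i = q_j = q$, the extra contributions of both trees coincide with $\{\delta_0(q,d),\delta_0(q,e)\}$, so again $\delta_\mathcal{A}(q_0,t) = \delta_\mathcal{A}(q_0,t')$.

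The one delicate step, which I expect to be the main obstacle, is the $T_1$ case: swapping a $c$ for a second $d$ both introduces and removes states, so one must ensure that the removed $c$-state is still produced by \emph{some} remaining leaf. That is precisely where the third witness leaf $k$ is needed, and why the pigeonhole bound is $2^h > 2n$ rather than $2^h > n$; once this bookkeeping is carried out, the identity of the state sets is immediate and the contradiction follows directly.
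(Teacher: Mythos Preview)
Your proof is correct and follows essentially the same strategy as the paper's: exhibit, via a pigeonhole argument, several leaves that receive the same arrival state in the run of a hypothetical DTDA with set acceptance, and then relabel those leaves to obtain two trees with identical outer-frontier state sets but different membership in $T_i$. The only difference is cosmetic: the paper works with right-combs and uses the ultimate periodicity of the state sequence along the spine (three periods supply the three needed repeated positions for $T_1$, and for $T_2$ the $d$ is shifted to a later period rather than swapped with $e$), whereas you use a complete binary tree and a direct pigeonhole count on the $2^h$ leaves---but the underlying idea, including your observation that a \emph{third} repeated leaf is needed in the $T_1$ case to retain the ``lost'' $c$-state, is exactly the same.
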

\begin{proof}
In both cases we proceed in the same way: Given a DTDA with set acceptance $\A_1$  recognizing $T_1$ 
we present a tree $t_1 \in T_1$ such that a modification of it is outside $T_1$ but still accepted by $\A_1$; similarly for a DTDA with set acceptance $\A_2$ recognizing $T_2$ 
and a suitable tree $t_2 \in T_2$.

We start with a ``right-comb'' where to the root symbol $a$ we attach $n$ times $a$ as right child, 
always taking $c$ as the left child, and at the last $a$ finishing with $c$ as left and right child. 
When increasing $n$,  the automaton $\A_1$ assumes an ultimately periodic sequence of states on 
the branch consisting of the letters $a$, say finishing the initial part of this state sequence after the $k$-th letter $a$, 
the first period after the $(k+p)$-th letter $a$, and the second period after the $(k+2p)$-th letter $a$.  

Let $t_1$ be the comb-tree with $(k+3p)$ letters $a$, where the left child of the $(k+1)$-th $a$ is changed from 
$c$ to $d$. Then $t_1 \in T_1$ and thus accepted by $\A_1$. 
If we change also left child of the $(k+p+1)$-th letter $a$ from $c$ to $d$, 
thus obtaining the tree $t'_1$, then $t'_1 \not\in T_1$ but 
by choice of $k$ and $p$ the same states will occur at the outer 
frontiers of $t_1$ and $t'_1$ in the runs of $\A_1$ on  these trees;  
so $\A_1$ accepts $t'_1$. See Figure~\ref{fig:example-one-non-bool-dtda} for an illustration.

\begin{figure}
\begin{center}
\begin{tikzpicture}[thick]

  \path (0,0)   node (root)    {$a$}    -- +(0,.3) node[right]{$q_0$}
  -- +(-.8,-.5) node (0)       {$c$}    
  -- ++(.5,-.5) node (1)       {$a$}    -- +(0,.3) node[right]{$q_1$}
  -- +(-.8,-.5) node (10)      {$c$}    
  -- ++(.8,-.8) node (k+1)     {$a$}    -- +(0,.3) node[right]{$q_k=q$}
  -- +(-.8,-.5) node (k+10)    {$d$}    
  -- ++(.8,-.8) node (k+p+1)   {$a$}    -- +(0,.3) node[right]{$q_{k+p}=q$}
  -- +(-.8,-.5) node (k+p+10)  {$c/d$}  
  -- ++(.8,-.8) node (k+2p+1)  {$a$}    -- +(0,.3) node[right]{$q_{k+2p}=q$}
  -- +(-.8,-.5) node (k+2p+10) {$c$}    
  -- ++(.8,-.8) node (k+3p)    {$a$}    -- +(0,.3) node[right]{$q_{k+3p}$}
  -- +(-.8,-.5) node (k+3p0)   {$c$}    
  -- ++(.5,-.5) node (k+3p1)   {$c$}    
  ;

  \draw[-] (root) edge (0) edge (1)
  (1) edge (10) edge[dotted] (k+1)
  (k+1) edge (k+10) edge[dotted] (k+p+1)
  (k+p+1) edge (k+p+10) edge[dotted] (k+2p+1)
  (k+2p+1) edge (k+2p+10) edge[dotted] (k+3p)
  (k+3p) edge (k+3p0) edge (k+3p1)
  ;

\end{tikzpicture}
\caption{Illustration for the proof of Proposition~\ref{twoexamples}. All inner nodes of the tree are labeled with~$a$. All leaves not shown in the illustration are labeled with $c$.} \label{fig:example-one-non-bool-dtda}
\end{center}
\end{figure}
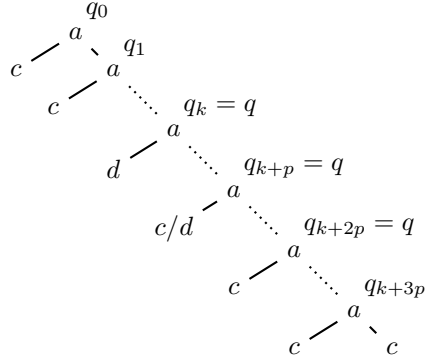
%
%

Given $\A_2$ we again take the comb-trees mentioned above and define $k$ and $p$ analogously. 
Let $t_2$ be the comb-tree with $(k+3p)$ letters $a$, where the left child of the $(k+1)$-th $a$ is changed from 
$c$ to $d$ and the left child of the $(k+p+1)$-th $a$ is changed from $c$ to $e$. Then $t_2 \in T_2$, thus accepted by $\A_2$. 
If we now change the left child of the $(k+1)$-th letter $a$ back to $c$, not touching the $e$ at the
left child of the $(k+p+1)$-th letter $a$, but changing the left child of the $(k+2p +1)$-th letter $a$ from $c$ to $d$, 
then we obtain a tree $t_2' $ outside $T_2$ but again by choice of $k$ and $p$ the same states will occur at the outer 
frontier of $t_2$ and $t'_2$ in the run of $\A_2$ on these trees, and thus $\A_2$ accepts $t_2'$. 

\end{proof}

We add a remark regarding the DTDA with regular frontier-check from \cite{JurvanenPT93}, mentioned
in Section 1. For the acceptance of input trees, such a DTDA $\A$ (say with state set $Q$)
is equipped with a standard finite automaton $\B$ on words whose input alphabet is $Q$.
A tree $t$ is accepted by $\A$ if the sequence of states reached by $\A$ on the outer frontier
$fr^+(t)$, read from left to right, is accepted by $\B$. The DTDA with set acceptance
correspond to the special case where the automata $\B$ only record the sets
of states from $Q$ that occur in given sequences over $Q$. The two tree languages
$T_1, T_2$ of Proposition 6 are easily seen to be recognizable by DTDA with
regular frontier-check, thus showing in a simpler way than in \cite{JurvanenPT93} that
${\rm Bool}({\cal T}({\rm DTDA}))$ is a proper subclass of the class of tree languages recognized by
DTDA with regular frontier-check. 

\section{Decidability of Membership in   $k$-${\rm Bool}({\cal T}({\rm DTDA}))$}    

In this section, we show that it is decidable whether a given regular tree language $T$ is a Boolean combination 
of at most $k$ deterministic top-down languages for a given $k$. We achieve this by invoking a decidability 
result on monadic second-order logic (MSO logic), namely that the MSO-theory of the infinite tree of given 
finite branching is decidable. We assume that the reader is acquainted with the basics of MSO logic (as 
given, e.g., in \cite{Thomas97}). 

We make use of the characterization of deterministic top-down languages in terms of their paths as given
 in Theorem~\ref{thm:DTDApaths}, which implies that we can encode DTDA-recognizable tree languages 
over $\Sigma$ by subsets of $\Gamma^*$ (recall that $\Gamma = \Sigma \cup D$ with $D = \{1, \ldots r\}$
where $r$ is the maximal rank of symbols in $\Sigma$). The following 
 characterization of membership in $k$-${\rm Bool}({\cal T}({\rm DTDA}))$ 
 provides the bridge to MSO logic, more precisely to MSO logic over the infinite $|\Gamma|$-branching tree:

\begin{lemma} \label{lem:k-bool}
  A tree language $T$ is in $k$-${\rm Bool}({\cal T}({\rm DTDA}))$ 
   iff there are regular sets $P_1, \ldots, P_k \subseteq \Gamma^*$ such that for all trees $t,t' \in \trees$ 
   with $t \in T$ and $t' \notin T$, there exists $i \in \{1, \ldots, k\}$ such that $(\pot(t) \subseteq P_i) \Leftrightarrow (\pot(t') \not\subseteq P_i)$. 
\end{lemma}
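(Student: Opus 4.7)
The plan is to prove both directions by converting between Boolean combinations of DTDA-recognizable languages and the profile a tree induces with respect to regular path sets, using Theorem~\ref{thm:DTDApaths} as the bridge.

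For the forward direction, I would start from $T \in k\text{-}{\rm Bool}({\cal T}({\rm DTDA}))$, so $T = \varphi(T_1, \ldots, T_k)$ for some Boolean function $\varphi$ and some $T_1,\ldots, T_k \in {\cal T}({\rm DTDA})$. By Theorem~\ref{thm:DTDApaths}, there exist regular sets $P_1, \ldots, P_k \subseteq \Gamma^*$ with $T_i = \tfp(P_i)$, so that $t \in T_i$ is equivalent to $\pot(t) \subseteq P_i$. Now suppose for contradiction there were $t \in T$ and $t' \notin T$ such that for every $i$ the equivalence $\pot(t) \subseteq P_i \Leftrightarrow \pot(t') \subseteq P_i$ fails to be an ``exclusive or'', i.e., $\pot(t) \subseteq P_i \Leftrightarrow \pot(t') \subseteq P_i$ holds. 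Then $t$ and $t'$ would agree on membership in every $T_i$ and hence on the value of $\varphi(T_1,\ldots,T_k)$, contradicting $t \in T$ and $t' \notin T$. Hence the required index $i$ must exist.

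For the backward direction, I would set $T_i := \tfp(P_i)$, which lies in ${\cal T}({\rm DTDA})$ by Theorem~\ref{thm:DTDApaths}, and assign each tree $t$ the \emph{profile} $\pi(t) \in \{0,1\}^k$ with $\pi(t)_i = 1$ iff $\pot(t) \subseteq P_i$, i.e., iff $t \in T_i$. The hypothesis of the lemma says precisely that whenever $t \in T$ and $t' \notin T$, the profiles $\pi(t)$ and $\pi(t')$ differ in some coordinate, so membership of $t$ in $T$ depends only on $\pi(t)$. Letting $S := \{\pi(t) \mid t \in T\} \subseteq \{0,1\}^k$, we obtain
\[
T \;=\; \bigcup_{v \in S}\Bigl(\bigcap_{i:\, v_i=1} T_i \;\cap\; \bigcap_{i:\, v_i=0} \overline{T_i}\Bigr),
\]
which exhibits $T$ as a Boolean combination of the $k$ languages $T_1, \ldots, T_k \in {\cal T}({\rm DTDA})$, so $T \in k\text{-}{\rm Bool}({\cal T}({\rm DTDA}))$.

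I do not expect any serious obstacle: the crux of the argument is the observation that Boolean combinations are exactly those languages whose membership is a function of the tuple of memberships in the base languages, and Theorem~\ref{thm:DTDApaths} lets us replace ``membership in $T_i$'' by ``path set included in $P_i$''. The only point requiring mild care is the convention on $k$: Boolean combinations using fewer than $k$ deterministic top-down languages can always be padded with dummy copies to reach exactly $k$, so the equivalence is insensitive to whether $k$ is read as ``exactly $k$'' or ``at most $k$''.
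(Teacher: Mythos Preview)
Your proof is correct and follows essentially the same approach as the paper: both directions hinge on the observation that membership in a Boolean combination of $T_1,\ldots,T_k$ is determined by the ``profile'' $(t \in T_i)_{i\le k}$, and Theorem~\ref{thm:DTDApaths} translates this into the path-inclusion conditions. The paper's forward direction goes through a DNF decomposition rather than your direct contradiction argument, but this is a cosmetic difference; the backward direction is identical in substance to the paper's, just with explicit profile notation in place of the paper's choices of $R_i \in \{T_i,\overline{T_i}\}$.
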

\begin{proof}
  For the left-to-right direction, assume that $T$ is in $k$-${\rm Bool}({\cal T}({\rm DTDA}))$. 
  Then $T$ is a Boolean combination of DTDA-recognizable languages $T_1, \ldots, T_k$.  Let $P_1, \ldots, P_k \subseteq \Gamma^*$ be regular path languages with $\tfp(P_i) = T_i$ according to Theorem~\ref{thm:DTDApaths}. 
  Note that for $t \in \trees$,  $\pot(t) \subseteq P_i$ is equivalent to $t \in T_i$.
  

  Write the Boolean combination of $T_1, \ldots, T_k$ for $T$ in disjunctive normal form, that is, as $T = \bigcup_i \bigcap_j R_{ij}$ where each $R_{ij}$ is one of $T_1, \ldots, T_k$ or a complement of one of $T_1, \ldots, T_k$.

  Now let $t, t' \in \trees$ with $t \in T$ and $t' \notin T$. Since $t \in T$, there is an index $i_0$ such that $t \in  \bigcap_j R_{i_0 j}$. And since $t' \notin T$, there is 
  (for all $i$, hence for $i_0$) an index $h$ such that $t' \notin R_{i_0 h}$. If $R_{i_0 h} = T_\ell$, then $\pot(t) \subseteq P_\ell$ and $\pot(t') \not\subseteq P_\ell$. If $R_{i_0h}$ is the complement of $T_\ell$, then $\pot(t) \not\subseteq P_\ell$ and $\pot(t') \subseteq P_\ell$.

  For the right-to-left direction, assume that $P_1, \ldots, P_k \subseteq \Gamma^*$  exist as in the claim of the lemma. Let $T_i$ be the set of trees $t$ with 
  $\pot(t) \subseteq P_i$ and choose   
  $R_1, \ldots, R_k$ such that each $R_i$ is either $T_i$ or the complement of $T_i$. Then for all trees $t,t' \in \trees$ with $t \in T$ and $t' \notin T$, there exists $i \in \{1, \ldots, k\}$ such that $t \in R_{i}$ iff $t' \notin R_{i}$. Hence, $R_1 \cap \cdots \cap R_k$ is either a subset of $T$ or has empty intersection with $T$. So we can write $T$ as the union of those $R_1 \cap \cdots \cap R_k$ that are subset of $T$.
\end{proof}{
The condition of Lemma~\ref{lem:k-bool} can be expressed in MSO logic over the infinite $\Gamma$-tree as we explain in the following.

The (infinite) $\Gamma$-tree is the structure with domain $\Gamma^*$ and binary successor relations $S_\gamma$ for each $\gamma \in \Gamma$ with $S_\gamma(u,v)$ if $v=u\gamma$. We also say that $v$ is the $\gamma$-successor of $u$. The root of the tree is $\varepsilon$. In the context of the infinite $\Gamma$-tree, we refer to the elements of $\Gamma^*$ as nodes of the tree. 

We consider monadic second-order logic (MSO logic) over this structure, which is first-order logic extended with quantification over sets of nodes, and atomic formulas $x \in X$ for first-order variables $x$ and set variables $X$. In general, set variables are denoted by capital letters, and first-order variables by lower case letters.

As an example, consider the following formula with two free first-order variables $x,y$ that expresses that $x$ is a prefix of $y$. The formula uses a set quantifier for the variable $X$.  We use $S(z,z')$ as an abbreviation for $\bigvee_{\gamma \in \Gamma} S_\gamma(z,z')$, which means that $z'$ is some successor of $z$.
\[
\forall X\;(y \in X \land \forall z,z'\;(S(z,z') \land z' \in X \rightarrow z \in X)) \rightarrow x \in X
\]
This formula states that all sets that contain $y$ and are closed under predecessors also contain $x$, which means that $x$ is a prefix of $y$. 
For more details on MSO logic over infinite trees, in particular the following theorem, see \cite{Thomas97}:
\vspace{0.2cm}

\begin{theorem}[\cite{Rabin69}]
  \label{thm:Rabin}
  It is decidable whether a given MSO formula $\varphi(X_1,\ldots,X_k)$ over the $\Gamma$-tree is satisfiable. Furthermore, if the formula is satisfiable, one can construct regular subsets $P_1, \ldots, P_k$ of $\Gamma^*$ that make $\varphi$ true if each $X_i$ is interpreted as $P_i$. 
\end{theorem}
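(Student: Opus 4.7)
The plan is to prove the theorem by the classical route of Rabin: translate the MSO formula to a finite automaton running on infinite labeled trees, reduce satisfiability to the non-emptiness problem for such automata, and extract a regular witness from a non-emptiness certificate. The free set variables $X_1,\ldots,X_k$ are encoded by enriching the labeling of each node of the infinite $\Gamma$-tree with a vector in $\{0,1\}^k$ whose $i$-th coordinate says whether the node belongs to $X_i$. A valuation of $(X_1,\ldots,X_k)$ thus becomes a coloring of the infinite $\Gamma$-tree by $\{0,1\}^k$, and the goal is to build an automaton $\A_\varphi$ that accepts exactly those colorings satisfying $\varphi$.

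First I would construct $\A_\varphi$ inductively on the structure of $\varphi$. Atomic formulas ($x\in X$, equality, the relations $S_\gamma(x,y)$, and the prefix relation of the example) are recognized by small automata that merely inspect the local labeling. The Boolean connectives correspond to closure operations on the automaton class: intersection by a product construction, union either by a sum (in the non-deterministic model) or again by a product, and negation by complementation. Existential quantification over a first-order variable is absorbed into quantification over a singleton set, and existential set quantification is realized by projecting away the corresponding coordinate of the $\{0,1\}^k$-labeling, which introduces non-determinism.

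The hard part is complementation. On infinite trees one must work with a sufficiently expressive acceptance condition (Muller, Rabin, or parity), and the complementation of non-deterministic Rabin/parity tree automata is the genuinely non-trivial step; this is where Rabin's original argument expends most of its effort, and where later simplifications rely on determinacy of parity games and memoryless winning strategies. I would cite this complementation result as the main black box and only invoke it once.

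Finally, once $\A_\varphi$ has been built, satisfiability of $\varphi$ is equivalent to $L(\A_\varphi) \neq \emptyset$. Non-emptiness of a parity tree automaton is decidable by reducing it to solving a finite parity game on its transition structure, and memoryless determinacy of parity games guarantees that, whenever the automaton accepts something, it accepts a \emph{regular} labeled tree (one generated by a finite transducer from the input tree). From such a regular labeled tree one reads off each $P_i \subseteq \Gamma^*$ as the set of nodes whose label has a $1$ in coordinate $i$; regularity of this set as a subset of $\Gamma^*$ follows from the regularity of the labeled tree, giving the constructive second part of the statement. The main obstacle is thus the complementation step for tree automata on infinite trees; the other ingredients are routine once that is in hand.
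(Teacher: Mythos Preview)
The paper does not prove this statement at all: Theorem~\ref{thm:Rabin} is quoted from Rabin~\cite{Rabin69} (with a pointer to \cite{Thomas97} for background) and used as a black box in the proof of Theorem~\ref{boundk}. There is therefore no ``paper's own proof'' to compare against.

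Your outline is a faithful sketch of the classical automata-theoretic proof of Rabin's theorem --- encode free set variables as $\{0,1\}^k$-labelings, build a tree automaton by induction on $\varphi$, use closure under Boolean operations and projection, invoke complementation of nondeterministic parity/Rabin tree automata as the hard step, and extract a regular witness from emptiness via memoryless determinacy of parity games. That is indeed the standard route. Since the paper only \emph{uses} Rabin's theorem and does not attempt to reprove it, your proposal goes well beyond what the paper does; if the intent was to mirror the paper, the appropriate response would simply be to cite \cite{Rabin69} and \cite{Thomas97} as the paper does.
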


We now explain how to express the condition from Lemma~\ref{lem:k-bool} in MSO (without giving all the details). 
Recall that labeled paths are elements of $\Gamma^*$, so they correspond to elements of the $\Gamma$-tree. One can easily write a formula $\phipath(x)$ that is true iff the node $x$ corresponds to a labeled path. The formula just needs to express that on the prefixes of $x$, the successors alternate between $\Sigma$ and $D$, starting with $\Sigma$, ending with $\Sigma_0$, and the $D$-successors match with the arity of the previous $\Sigma$-successor.

For every tree $t \in \trees$, we have that $\pot(t) \subseteq \Gamma^*$. We say that a set $U \subseteq \Gamma^*$ of nodes {\em encodes} a tree if $U = \pot(t)$ for some tree $t$. Using $\phipath(x)$, it is not very difficult to write a formula $\phitree(X)$ that expresses that the set $X$ encodes a tree.
Such a formula needs to ensure that $X$ is non-empty, only contains elements that correspond to labeled paths, and the prefix closure $X'$ of $X$ has the following properties: (1) if $x \in X'$ ends in a letter $a \in \Sigma_i$, then $xj \in X'$ iff $j \in \{1,\ldots,i\}$, and (2) if $x \in X'$ ends in a direction $j \in D$, then there is a unique $a \in \Sigma$ with $x'a \in X'$.

Furthermore, given a regular language $T \subseteq \trees$ of trees, there is a formula $\phiT(X)$ that expresses that $X$ encodes a tree that is in $T$. Such a formula uses $\phitree(X)$, and furthermore expresses the existence of an accepting run of an automaton for $T$ on the tree encoded by $X$. The standard translations from finite automata to MSO (see \cite{Thomas97}) can easily be adapted to our encoding of finite tree as subsets of the $\Gamma$-tree. Similarly, there is a formula $\phinT(X)$ that expresses that $X$ encodes a tree that is not in $T$.

Now the condition from Lemma~\ref{lem:k-bool} can be expressed in MSO by the following MSO sentence.
\[
\exists X_1, \ldots, X_k \, \forall Y,Y' \left(\phiT(Y) \land \phinT(Y') \rightarrow \bigvee_{i=1}^k(Y \subseteq X_i \leftrightarrow Y' \not\subseteq X_i)\right)
\]
From Theorem~\ref{thm:Rabin}, we can conclude that if this sentence is true, then there are regular sets $P_1, \ldots, P_k$ for the $X_i$ that make the sentence true. Hence,
by Lemma~\ref{lem:k-bool}, the sentence is true iff $T$ is in $k$-${\rm Bool}({\cal T}({\rm DTDA}))$.
By Theorem~\ref{thm:Rabin}, membership of $T$ in $k$-${\rm Bool}({\cal T}({\rm DTDA}))$ is decidable:

\begin{theorem}\label{boundk}
It is decidable for a given regular tree language $T$ and a given number $k$ whether $T$ is a Boolean combination of (at most) $k$ DTDA-recognizable languages.
\end{theorem}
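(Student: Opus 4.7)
The plan is to combine Lemma~\ref{lem:k-bool} with Rabin's theorem (Theorem~\ref{thm:Rabin}) to reduce the decision problem to testing satisfiability of a single MSO sentence over the infinite $\Gamma$-branching tree. By Lemma~\ref{lem:k-bool}, membership of $T$ in $k$-${\rm Bool}({\cal T}({\rm DTDA}))$ is equivalent to the existence of regular path languages $P_1, \ldots, P_k \subseteq \Gamma^*$ separating every accept/reject pair $(t,t')$ in the specified way. Since Rabin's theorem allows us to existentially quantify over regular subsets of $\Gamma^*$ via MSO set-quantifiers, it suffices to express the remainder of the condition in MSO.

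First, I would assemble the auxiliary formulas already sketched in the paper: $\phipath(x)$ verifying that the prefixes of $x$ alternate correctly between labels and directions and terminate in $\Sigma_0$; $\phitree(X)$ stating that $X$ is the $\pot$-image of some tree, by imposing consistency on the prefix closure of $X$ (each $\Sigma_i$-prefix has exactly the successors $1,\ldots,i$ and each direction-prefix is extended by a unique $\Sigma$-letter); and finally $\phiT(X)$ and $\phinT(X)$ asserting additionally that the encoded tree is in $T$ (respectively not in $T$). Having these in hand, I would form the sentence
\[
\exists X_1, \ldots, X_k \; \forall Y, Y' \; \Bigl( \phiT(Y) \land \phinT(Y') \rightarrow \bigvee_{i=1}^k (Y \subseteq X_i \leftrightarrow Y' \not\subseteq X_i) \Bigr)
\]
and apply Theorem~\ref{thm:Rabin}: the sentence is decidable, and by Lemma~\ref{lem:k-bool} its truth coincides with membership of $T$ in $k$-${\rm Bool}({\cal T}({\rm DTDA}))$.

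The main technical obstacle is the construction of $\phiT(X)$ from a bottom-up automaton for $T$, because $X$ encodes a tree through its set of root-to-frontier paths rather than through its domain in the usual way. The cleanest route is to first MSO-define the prefix closure $X'$ of $X$, which recovers the labeled domain of the encoded tree, and then carry out the standard MSO simulation of a bottom-up run on $X'$ by existentially quantifying one set variable $Q_q$ per state, imposing the transition constraints at each inner node and acceptance at the root. Once this adaptation is verified, the remaining pieces are routine, and the theorem follows directly from Theorem~\ref{thm:Rabin}.
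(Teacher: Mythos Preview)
Your proposal is correct and follows essentially the same route as the paper: reduce via Lemma~\ref{lem:k-bool} to the existence of suitable path sets, express this as the identical MSO sentence over the infinite $\Gamma$-tree using the auxiliary formulas $\phipath$, $\phitree$, $\phiT$, $\phinT$, and invoke Theorem~\ref{thm:Rabin}. The one point worth stating a bit more carefully is that MSO set quantifiers range over \emph{all} subsets, not just regular ones; the reason the equivalence with Lemma~\ref{lem:k-bool} goes through is the second clause of Theorem~\ref{thm:Rabin}, which guarantees regular witnesses whenever the sentence is satisfiable---the paper makes this step explicit, and you should too.
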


\section{Conclusion}

We have shown two results that provide some progress towards an effective characterization of 
the Boolean closure of the class of DTDA-recognizable tree languages. In the first result, 
we offered an automata-theoretic characterization; however, this characterization does 
not permit an effective decision whether a regular tree language belongs to ${\rm Bool}({\cal T}({\rm DTDA}))$. 
In the second result such an effective decision was provided; however, it refers to 
a fixed subclass $k$-${\rm Bool}({\cal T}({\rm DTDA}))$ of ${\rm Bool}({\cal T}({\rm DTDA}))$. 

We conclude with some remarks on open questions.

In Proposition~\ref{twoexamples}
 two phenomena are mentioned that cause a regular tree language to fall outside 
${\rm Bool}({\cal T}({\rm DTDA}))$, namely, multiple occurrence of (properties of) subtrees, 
and the order between occurrences of (properties of) subtrees. We may ask whether there are other 
conditions of this kind on regular tree languages that also induce non-membership in  ${\rm Bool}({\cal T}({\rm DTDA}))$. 

A related track of research is to search for a characterization of  ${\rm Bool}({\cal T}({\rm DTDA}))$ 
in terms of the transition structure of minimal bottom-up tree automata (which are determined up to 
isomorphism for a given regular tree language). Specifically, one may 
search for patterns in the transition function which induce the presence 
of counting multiplicities of subtrees or of distinguishing the order of subtrees.  
We recall that such ``forbidden patterns'' are well-known
in the classical theory of automata over finite words. For example, Sch\"utzenberger's Theorem on the 
characterization of star-free word languages says that a regular word language $L$ is star-free
iff the minimal deterministic automaton recognizing $L$ is ``permutation-free'', i.e., there is no word $w$ 
that induces a non-trivial permutation of states of the automaton (a property that can effectively be decided 
since $ |w| $ can be bounded in the number of states of the automaton).

The problem of deciding membership of regular tree languages in ${\rm Bool}({\cal T}({\rm DTDA}))$
can also be phrased in a simpler and direct manner by asking for certain bounds in the two results 
of the present paper: Given a regular 
tree language $T$ (presented, e.g., in terms of its minimal bottom-up tree automaton), can one provide a bound $K$ 
such that $T$ belongs to ${\rm Bool}({\cal T}({\rm DTDA}))$ iff a DTDA with set acceptance and $\leq K$ states 
recognizes $T$? 
Or, more on the track of Theorem \ref{boundk}, can one determine some $K$ such that $T$ belongs to 
${\rm Bool}({\cal T}({\rm DTDA}))$ iff $T$ belongs to $K$-${\rm Bool}({\cal T}({\rm DTDA}))$?

\bibliographystyle{ws-ijfcs}
\bibliography{../complete}

\begin{thebibliography}{10}

\bibitem{BojPhD}
M.~Boja{\'n}czyk, Decidable properties of tree languages, PhD thesis, Warsaw
  University  (2004).

\bibitem{GecsegS78}
F.~G{\'{e}}cseg and M.~Steinby, Minimal ascending tree automata, {\em Acta
  Cybern.} {\bf 4}(1)  (1978)  37--44.

\bibitem{GecsegS84}
F.~G\'ecseg and M.~Steinby, {\em Tree Automata} (Akad\'emiai Kiad\'o, Budapest,
  1984).

\bibitem{Jurvanen92}
E.~Jurvanen, The {B}oolean closure of {DR}-recognizable tree languages, {\em
  Acta Cybern.} {\bf 10}(4)  (1992)  255--272.

\bibitem{JurvanenPT93}
E.~Jurvanen, A.~Potthoff and W.~Thomas, Tree languages recognizable by regular
  frontier check, {\em Developments in Language Theory, At the Crossroads of
  Mathematics, Computer Science and Biology, Turku, Finland, 12-15 July
  1993\/},  (World Scientific, Singapore, 1994), pp. 3--17.

\bibitem{JurvanenS2019}
E.~Jurvanen and M.~Steinby, Fuzzy deterministic top-down tree automata, {\em
  CoRR} {\bf abs/1911.11529}  (2019).

\bibitem{LeupoldM21}
P.~Leupold and S.~Maneth, Deciding top-down determinism of regular tree
  languages, {\em Fundamentals of Computation Theory - 23rd International
  Symposium, {FCT} 2021, Athens, Greece, September 12-15, 2021, Proceedings\/},
   {\em Lecture Notes in Computer Science} {\bf 12867}, (Springer, 2021), pp.
  341--353.

\bibitem{ManethS24}
S.~Maneth and H.~Seidl, Checking in polynomial time whether or not a regular
  tree language is deterministic top-down, {\em Inf. Process. Lett.} {\bf 184}
  (2024) p. 106449.

\bibitem{Muller63}
D.~E. Muller, Infinite sequences and finite machines, {\em Proceedings of the
  4th IEEE Symposium on Switching Circuit Theory and Logical Design\/},
  (1963), pp. 3--16.

\bibitem{Rabin69}
M.~O. Rabin, Decidability of second-order theories and automata on infinite
  trees, {\em Transactions of the American Mathematical Society} {\bf 141}
  (July 1969)  1--35.

\bibitem{Thomas97}
W.~Thomas, Languages, automata, and logic, {\em {Handbook of Formal Language
  Theory}\/},  eds. G.~Rozenberg and A.~Salomaa,  {\bf III} (Springer, 1997),
  pp. 389--455.

\bibitem{Viragh80}
J.~Vir\'agh, Deterministic ascending tree automata {I}., {\em Acta Cybernet.}
  {\bf 5}  (1980)  33--42.

\end{thebibliography}




\end{document}